\newtheorem{theorem}{Theorem}[section]
\theoremstyle{definition}
\newtheorem{note}[theorem]{Note}
\theoremstyle{remark}
\begin{document}

\title[Evaluation of the second virial coefficient for the Mie potential]{Evaluation of the second virial coefficient for the Mie potential using the method of brackets}

\author[Iv\'{a}n Gonz\'{a}lez et al]{Ivan Gonzalez}
\address{Instituto de {F}\'{i}sicas y {A}stronom\'{i}a,~{U}niversidad de {V}alpara\'{i}so,~{A}venida~{G}ran~{B}reta\~{n}{a} $1111$,~Valpara\'{i}so,~{C}hile}
\email{ivan.gonzalez@uv.cl}

\author[]{Igor Kondrashuk}
\address{Grupo de Matem\'atica Aplicada {\rm \&} Grupo de F\'isica de Altas Energ\'ias  {\rm \&}  Centro de Ciencias Exactas   {\rm \&}  Departamento de Ciencias B\'asicas, Universidad del B\'io-B\'io, 
         Campus Fernando May, Av. Andres Bello 720, Casilla 447, Chill\'an, Chile}
\email{igor.kondrashuk@gmail.com}

\author[]{Victor H. Moll}
\address{Department of Mathematics,
Tulane University, New Orleans, LA 70118}
\email{vhm@math.tulane.edu}

\author[]{Daniel Salinas-Arizmendi}
\address{Departamento de {F}\'{i}sica,~{U}niversidad~{T}\'{e}cnica~{F}ederico~{S}anta~{M}ar\'{i}a y {C}entro {C}ient\'{i}fico 
{T}ecnol\'{o}gico de {V}alpara\'{i}so,~{C}hile ({CCTV}al.),~{C}asilla $110-{V}$,~{V}alpara\'{i}so,~{C}hile}
\email{daniel.salinas@usm.cl}

%    General info
\subjclass[2010]{Primary 33}

\date{\today}

\keywords{integrals, method of brackets, second virial coefficient, Mie potential}

\maketitle

\newcommand{\ba}{\begin{eqnarray}}
\newcommand{\ea}{\end{eqnarray}}
\newcommand{\ift}{\int_{0}^{\infty}}
\newcommand{\nn}{\nonumber}
\newcommand{\no}{\noindent}
\newcommand{\lf}{\left\lfloor}
\newcommand{\rf}{\right\rfloor}
\newcommand{\realpart}{\mathop{\rm Re}\nolimits}
\newcommand{\imagpart}{\mathop{\rm Im}\nolimits}

\newcommand{\op}[1]{\ensuremath{\operatorname{#1}}}
\newcommand{\pFq}[5]{\ensuremath{{}_{#1}F_{#2} \left( \genfrac{}{}{0pt}{}{#3}
{#4} \bigg| {#5} \right)}}

\newmuskip\pFqmuskip

\newtheorem{Definition}{\bf Definition}[section]
\newtheorem{Thm}[Definition]{\bf Theorem}
\newtheorem{Example}[Definition]{\bf Example}
\newtheorem{Lem}[Definition]{\bf Lemma}
\newtheorem{Cor}[Definition]{\bf Corollary}
\newtheorem{Prop}[Definition]{\bf Proposition}
\numberwithin{equation}{section}

\begin{abstract}
The second virial coefficient for the Mie potential is evaluated using the method of brackets. This method converts a definite integral into a 
series in the parameters of the problem, in this case this is the temperature $T$. The results obtained here are consistent with some 
known  special cases,  such as the Lenard-Jones potential. The asymptotic properties of the second virial coefficient in molecular thermodynamic systems and complex fluid modeling are described in the limiting cases of $T \rightarrow 0$ and $T \rightarrow \infty$.
\end{abstract}

%\textcolor{red}{In this paper, we evaluate the second virial coefficient for the Mie potential using the method of brackets. As a result, an analytical expression is obtained as a series in the temperature parameter $T$. The results show that the obtained expression is consistent with the results of special cases (such as the Lenard-Jones potential; \textcolor{blue}{I'm not sure if I should include this parenthesis}). This work contributes to understanding the asymptotic properties of the second virial coefficient in molecular thermodynamic systems and complex fluid modeling.}
%\end{abstract}

\section{Introduction}
\label{sec-introduction}

The classical virial expansion expresses the pressure $P$ of a many-particle system in equilibrium as a 
power series in the density $\rho$:
\begin{equation}
Z := \frac{P}{RT\rho} = A + B \rho +C \rho^{2} + \cdots 
\end{equation}
\noindent
where $T$ is the temperature and $Z$ is the so-called compressibility factor.  This is a 
dimensionless term which measures how much a real fluid deviates from an 
ideal gas.  The first virial coefficient $A$ is normalized to $1$ expresses the fact that, at low density, all fluids behave like ideal gases. 

The second virial coefficient is given by 
\begin{equation}
B=B(T) = - 2 \pi \int_{0}^{\infty} \left[ e^{-u(r)/kT} - 1 \right] r^{2} \, dr,
\label{form-B1}
\end{equation}
\noindent
where $u(r)$ is the intermolecular potential of the particles in the system.  This classical expression was derived by L.~Ornstein in his 
$1908$ Ph.~D. thesis. Here we analyze the case of the Mie potential 
\cite{mie-1903a}, a generalization of 
the Lennard-Jones potential given by 
\begin{equation}
u(r) = \varepsilon A \left[ \left( \frac{\sigma}{r} \right)^{n} -  \left( \frac{\sigma}{r} \right)^{m} \right],
\end{equation}
\noindent
depending on two parameters $n, \, m$ satisfying $n>m>3$ and a prefactor $A$ defined by 
\begin{equation}
A = \left( \frac{n}{n-m} \right) \left( \frac{n}{m} \right)^{  \frac{m}{n-m}}.
\end{equation}
\noindent
The Lenard-Jones potential corresponds to the values $n=12$ and $m=6$. 
The parameter $\sigma$ is related to the size of the particles, the parameters $n, \, m$ characterize the shape of the potential: $n$ for 
repulsion and $m$ for attraction,  $r$ is the relative distance among particles and $\varepsilon$ is the depth of the potential well.
The second virial coefficient is given by
\begin{equation}
\label{form-B2}
B(T_{*})  = - 2 \pi \int_{0}^{\infty} 
\left[ 
\exp \left( - \frac{1}{T_{*}} 
\left[
 \left( \frac{\sigma}{r} \right)^{n} - \left( \frac{\sigma}{r} \right)^{m}
 \right] \right)  - 1 
\right] \, r^{2} dr,
\end{equation}
\noindent 
using the notation
\begin{equation}
\label{Tstar}
\frac{1}{T_{*}} = \frac{A \varepsilon}{kT}.
\end{equation}

The goal of this note is to use the method of brackets, developed in \cite{gonzalez-2010a}, to obtain an analytic expression for the 
second virial coefficient in \eqref{form-B2}.

\section{The evaluation by classical methods}
\label{sec-evaluation-analysis}

In this section $B(T_{*})$ is evaluated by  traditional methods of mathematical analysis. An alternative approach has been presented in 
\cite{heyes-2016a}. The parameters must 
satisfy  $n>m > 3$ in order to guarantee the convergence 
of the integral appearing in this expression.  An artificial parameter $\Lambda$ is introduced in order to deal with convergence issues in 
\eqref{form-B2}. In addition,  a new parameter $\epsilon$ is introduced and is used to treat the first term in the resulting series. Then 
\eqref{form-B2} is written as 
\begin{eqnarray} \label{limits} 
B(T_{*}) & = &  - 2 \pi  \lim\limits_{\epsilon \rightarrow 0}\int_{0}^{\infty} \left[ \exp \left( - \frac{1}{T_{*}} \left[\left( \frac{\sigma}{r} \right)^{n} - \left( \frac{\sigma}{r} \right)^{m}\right] \right) -1 \right]  r^{2-n\epsilon} dr \\
& = &  - 2 \pi  \lim\limits_{\epsilon \rightarrow 0}\lim\limits_{\Lambda \rightarrow \infty} \int_{0}^{\infty} 
\left[ \exp \left( - \frac{1}{T_{*}} \left[\left( \frac{\sigma}{r} \right)^{n} - \left( \frac{\sigma}{r} \right)^{m}\right] \right) \right. \nonumber \\ 
& &\quad \quad  - \left.  \exp \left( - \frac{1}{\Lambda} \left[\left( \frac{\sigma}{r} \right)^{n} - \left( \frac{\sigma}{r} \right)^{m}\right] \right)\right]  r^{2-n\epsilon} dr.
\nonumber
\end{eqnarray}

The parameter $\epsilon$ is kept finite and large enough to regularize the Euler gamma function appearing 
in the first term of the resulting series. Then, some simple transformations  lead to 
\begin{multline*} 
\int_{0}^{\infty} \left[ \exp \left( - \frac{1}{T_{*}} \left[\left( \frac{\sigma}{r} \right)^{n} - \left( \frac{\sigma}{r} \right)^{m}\right] \right)  \right]
  r^{2-n\epsilon} dr 
  \end{multline*}
  \begin{eqnarray*}
   & = & 
 \int_{0}^{\infty} \left[ \exp \left( - \frac{1}{T_{*}} \left[\left( {\sigma}{r} \right)^{n} - \left( {\sigma}{r} \right)^{m}\right] \right)  \right]  r^{n\epsilon -4} dr \\
& = &  {\sigma}^{3-n\epsilon}\int_{0}^{\infty} \left[ \exp \left( - \frac{1}{T_{*}} \left[r^n - r^m\right] \right)  \right]  r^{n\epsilon -4} dr \\
& = &   {\sigma}^{3-n\epsilon} \sum_{k=0}^{\infty} \frac{1}{k!T_{*}^k}  \int_{0}^{\infty} r^{mk +n\epsilon -4 }   \exp \left( - \frac{r^n}{T_{*}} \right)    dr \\
& = & \frac{{\sigma}^{3-n\epsilon}}{n} \sum_{k=0}^{\infty} \frac{1}{k!T_{*}^k}  \int_{0}^{\infty} r^{(mk +n\epsilon -3 )/n -1}   \exp \left( - \frac{r}{T_{*}} \right)    dr \\ 
& = &  \frac{{\sigma}^{3-n\epsilon}}{n}T_{*}^{\epsilon - 3/n} \sum_{k=0}^{\infty} \frac{T_{*}^{mk/n}}{k!T_{*}^k}  \int_{0}^{\infty} r^{(mk +n\epsilon -3 )/n -1}   e^{ - r }    dr \\
& = & \frac{{\sigma}^{3-n\epsilon}}{n}T_{*}^{\epsilon - 3/n} \sum_{k=0}^{\infty} \frac{T_{*}^{(m-n)k/n}}{k!}  \Gamma \left( \frac{mk-3}{n} + \epsilon \right).
\end{eqnarray*}
\noindent
Observe that $\epsilon > 3/n$ is required to evaluate the integral in the last line in terms of the Gamma function. The final identity 
\begin{multline}
\int_{0}^{\infty} \left[ \exp \left( - \frac{1}{T_{*}} \left[\left( \frac{\sigma}{r} \right)^{n} - \left( \frac{\sigma}{r} \right)^{m}\right] \right)  \right]
  r^{2-n\epsilon} dr =  \\ \frac{{\sigma}^{3-n\epsilon}}{n}T_{*}^{\epsilon - 3/n} \sum_{k=0}^{\infty} \frac{T_{*}^{(m-n)k/n}}{k!}  \Gamma \left( \frac{mk-3}{n} + \epsilon \right)
  \end{multline}
  \noindent
  is analytic in $\epsilon$, so we might let $\epsilon \rightarrow 0$ and $\Lambda \rightarrow \infty$ to produce 
%
%needed to keep the $k=0$ term regular in the sum above.  The second term in \eqref{limits} is obtained by
% replacing $T_{*}$ by $\Lambda$
%and then
\begin{multline*}
B(T_{*}) = -   \frac{2\pi}{n} \lim\limits_{\epsilon \rightarrow 0} {\sigma}^{3-n\epsilon}T_{*}^{\epsilon - 3/n} \sum_{k=0}^{\infty} \frac{T_{*}^{(m-n)k/n}}{k!}  \Gamma \left( \frac{mk-3}{n} + \epsilon \right) \\
 -    \frac{2\pi}{n} \lim\limits_{\epsilon \rightarrow 0}\lim\limits_{\Lambda \rightarrow \infty} {\sigma}^{3-n\epsilon}\Lambda^{\epsilon - 3/n} \sum_{k=0}^{\infty} \frac{\Lambda^{(m-n)k/n}}{k!}  \Gamma \left( \frac{mk-3}{n} + \epsilon \right).  
\end{multline*}
The second term vanishes in the limit $\Lambda \rightarrow \infty$ and when  $\epsilon \rightarrow 0$ 
it gives the known result 
\begin{eqnarray*}
B(T_{*}) & = &  - \frac{2\pi}{n}\lim\limits_{\epsilon \rightarrow 0} {\sigma}^{3-n\epsilon}T_{*}^{\epsilon - 3/n} \sum_{k=0}^{\infty} \frac{T_{*}^{(m-n)k/n}}{k!}  \Gamma \left( \frac{mk-3}{n} + \epsilon \right) \\
& = &  - \frac{2\pi}{n} {\sigma}^3 T_{*}^{- 3/n} \sum_{k=0}^{\infty} \frac{T_{*}^{(m-n)k/n}}{k!}  \Gamma \left( \frac{mk-3}{n}\right).
\end{eqnarray*}
\noindent 
Observe that the term for $k=0$ in the expression for $B(T_{*})$, as $\epsilon \rightarrow 0$,  contains the term 
\begin{equation*}
\lim\limits_{\varepsilon \rightarrow 0} \Gamma \left( - \frac{3}{n} + \varepsilon \right) = 
\lim\limits_{\varepsilon \rightarrow 0} \frac{\Gamma(-3/n + \varepsilon + 1)}{-3/n + \varepsilon}
 = \frac{\Gamma(-3/n+1)}{(-3/n)} = \Gamma \left( - \frac{3}{n} \right).
 \end{equation*}
 \noindent
 and this  limiting value is finite since $n>3$. Therefore  letting $\varepsilon \rightarrow 0$ does not produce singularities.

\section{The method of brackets}
\label{sec-brackets}

Section \ref{sec-evaluation} presents the evaluation of the second virial coefficient 
$B(T_{*})$ using the method of brackets. This is a method of integration, based on 
a small number of rules described here. A complete description of this method as well as a variety of definite integrals evaluated using 
it may be found in 
\cite{amdeberhan-2012b,bravo-2017a,gonzalez-2010c,gonzalez-2010a,gonzalez-2014a,gonzalez-2017a,gonzalez-2017b,gonzalez-2020a,gonzalez-2022a}.

The method of brackets evaluates an integral of the form 
\begin{equation}
I = \int_{0}^{\infty} f(x) \, dx
\end{equation}
\noindent
where the function $f$ has an expansion of the form 
\begin{equation}
f(x)  = \sum_{n=0}^{\infty} C(n) x^{\alpha n + \beta -1}
\end{equation}
\noindent
with $\alpha, \, \beta \in \mathbb{C}$. (The extra $-1$ in the exponent is just a convenience for future formulas).

The basic concept is the definition of the \texttt{bracket} by the integral 
\begin{equation}
\langle  b \rangle = \int_{0}^{\infty} x^{b-1} \, dx
\end{equation}
\noindent
and (by linearity) this gives 
\begin{equation}
I = \int_{0}^{\infty} f(x) \, dx = \int_{0}^{\infty} \sum_{n=0}^{\infty} C(n) x^{\alpha n + \beta -1} = \sum_{n=0}^{\infty} C(n)
\langle \alpha n + \beta \rangle.
\end{equation}
\noindent
The expression on the right is called a \texttt{bracket series}.  The method consists of a sequence of rules to generate and evaluate
such series. It is convenient to introduce the so-called \texttt{indicator} defined by 
\begin{equation}
\phi_{n} = \frac{(-1)^{n}}{n!} = \frac{(-1)^{n}}{\Gamma(n+1)}.
\end{equation}

\smallskip 

\noindent
\texttt{Rule 1}.  To an integral of the form 
\begin{equation*}
\int_{0}^{\infty} \sum_{n} \phi_{n} C(n) x^{\alpha n + \beta - 1} \, dx
\end{equation*}
\noindent
one assigns the bracket series 
\begin{equation*}
\sum_{n} \phi_{n} C(n) \langle \alpha n + \beta \rangle.
\end{equation*}
\noindent
This bracket series is assigned the value 
\begin{equation*}
\frac{1}{| \alpha | } C(n^{*}) \Gamma(-n^{*})
\end{equation*}
\noindent
where $n^{*}$ is the unique solution to  $\alpha n + \beta  = 0$. Observe that this requires an extension of the function $C$ defined originally 
for indices $n \in \mathbb{N}$ to $\mathbb{C}$. 

\smallskip

\noindent
\texttt{Rule 2}. Let $A = (\alpha_{ij})$ be a nonsingular matrix. The multidimensional extension of Rule 1 is as follows: To an integral of 
the form 
\begin{equation*}
\int_{0}^{\infty} \cdots \int_{0}^{\infty} \sum_{n_{1}, \cdots,  n_{k}} C(n_{1}, \cdots, n_{k}) 
x_{1}^{\alpha_{11}n_{1} + \cdots + \alpha_{1k}n_{k} + \beta_{1} - 1} \cdots x_{k}^{\alpha_{k1}n_{1} + \cdots + \alpha_{kk}n_{k} + \beta_{k} - 1} 
dx_{1} \cdots dx_{k}
\end{equation*}
\noindent
one assigns the multidimensional bracket series 
\begin{equation*}
\sum_{n_{1}, \cdots, n_{k}} \phi_{n_{1} \cdots n_{k}} C(n_{1}, \cdots, n_{k}) 
\langle \alpha_{11} n_{1} + \cdots + \alpha_{1k}n_{k} + \beta_{1} \rangle \cdots 
\langle \alpha_{k1} n_{1} + \cdots + \alpha_{kk}n_{k} + \beta_{k} \rangle,
\end{equation*}
\noindent
with $\phi_{n_{1} \cdots n_{k}}= \phi_{n_{1}} \cdots \phi_{n_{k}}$. To this bracket series, one assigns the value 
\begin{equation}
\frac{1}{| \det(A) |} C(n_{1}^{*}, \cdots, n_{k}^{*}) \prod_{j=1}^{k} \Gamma(-n_{j}^{*})
\end{equation}
\noindent
where $n_{1}^{*}, \ldots, n_{k}^{*}$ is the unique solution to the linear system given by the vanishing of the brackets.  If the matrix $A$ is 
singular, the method is inconclusive.  The issue of extending $C$ is treated as in the one-dimensional case. 

\smallskip

\noindent
\texttt{Rule 3}. This deals with the situation of a multidimensional bracket series in which the number of brackets is fewer than the 
number of indices in the sum. Then one must choose free indices from $n_{1}, \cdots, n_{k}$, equal in number to that of the brackets appearing. 
For each of these choices a series in the free indices, called a \texttt{basis series}, is obtained by applying Rule 2. If the basis series is 
divergent, then it is discarded. The value of the integral is obtained by summing the basis series which converge in a common region. In 
general, there will be multiple series solutions built from the basis series, each one of which is a series representation for the integral in the 
common region of convergence. 

%A description of this method as well as a variety of definite integrals evaluated using it may be found in 
%\cite{amdeberhan-2012b,bravo-2017a,gonzalez-2010c,gonzalez-2010a,gonzalez-2014a,gonzalez-2017a,gonzalez-2017b,gonzalez-2020a,gonzalez-2022a}.

\section{The evaluation by the method of brackets}
\label{sec-evaluation}

This section uses the method of brackets to evaluate  $B(T_{*})$. The expression is given as a series in $T_{*}$ and 
the parameters must satisfy  $n>m > 3$ in order to guarantee its  convergence.
Section \ref{sec-specialcases} shows how to evaluate this series in the case  when the quotient $m/n$ is a rational number.  
A special function central to these 
evaluation is defined next. 

\begin{Definition}
\label{def-kummer}
The hypergeometric function $_{1}F_{1}$, given by 
\begin{equation}
\pFq11{a}{b}{z} = \sum_{k=0}^{\infty} \frac{(a)_{k}}{(b)_{k}} \frac{z^{k}}{k!}
\end{equation}
\noindent
is referred in the literature as the \texttt{Kummer function}.
\end{Definition}

The computation of  $B(T_{*})$ is divided into two parts. In order to take into account the term $-1$ in the second integral, an artificial parameter 
$\Lambda$ is introduced. In the limit $\Lambda \rightarrow \infty$ the second integral vanishes. The role of the 
parameter $\Lambda$ is to guarantee the convergence of the expression for $B(T_{*})$ given below. Define 
\begin{equation}
\label{form-B2a}
J_{T_{*}}  = - 2 \pi \int_{0}^{\infty} 
\left[ 
\exp \left( - \frac{1}{T_{*}} 
\left[
 \left( \frac{\sigma}{r} \right)^{n} - \left( \frac{\sigma}{r} \right)^{m}
 \right] \right)  
\right] \, r^{2} dr,
\end{equation}
\noindent
and 
\begin{equation}
\label{form-B2b}
J_{\Lambda} = - 2 \pi \int_{0}^{\infty} 
\left[ 
\exp \left( - \frac{1}{\Lambda}
\left[
 \left( \frac{\sigma}{r} \right)^{n} - \left( \frac{\sigma}{r} \right)^{m}
 \right] \right)  
\right] \, r^{2} dr.
\end{equation}
\noindent
Then 
\begin{equation}
\label{B-Tstar}
B(T_{*}) = J_{T_{*}} - \lim\limits_{\Lambda \rightarrow \infty} J_{\Lambda}.
\end{equation}

\noindent
\texttt{Step 1}. The computation starts with producing a bracket series for the integral $J_{T_{*}}$. This comes directly from the expansion of the 
exponential function:
\begin{eqnarray}
& & \\ 
J_{T_{*}} & = & -2 \pi \int_{0}^{\infty} \exp \left( - \frac{1}{T_{*}} \left( \frac{\sigma}{r} \right)^{n} \right) 
\exp \left( \frac{1}{T_{*}} \left( \frac{\sigma}{r} \right)^{m} \right) r^{2} \, dr  \nonumber \\
& = & - 2 \pi \int_{0}^{\infty} 
\left[ \sum_{\ell \geq 0} \phi_{\ell} \left( \frac{1}{T_{*}} \right)^{\ell} \sigma^{\ell n} r^{- \ell n} \right] 
\left[ \sum_{j \geq 0} \phi_{j} (-1)^{j} \left( \frac{1}{T^{*}} \right)^{j} \sigma^{j m} r^{-j m} \right] r^{2} \, dr \nonumber \\
& = & - 2 \pi \sum_{\ell \geq 0 } \sum_{j \geq 0} \phi_{\ell j} (-1)^{j} \left( \frac{1}{T_{*}} \right)^{\ell + j } \sigma^{\ell n + m j } 
\langle 3 - \ell n - m j \rangle. \nonumber
\end{eqnarray}

\smallskip

\noindent
\texttt{Step 2}. The evaluation of the bracket series in Step 1 produces two series, one per free index $\ell$ or $j$. A direct 
computation gives 
%\begin{equation}
%J_{T_{*}}^{(1)} = (-1)^{\tfrac{m+3}{m}}  \frac{2 \pi \sigma^{3}}{m T_{*}^{3/m}} 
%\sum_{k=0}^{\infty} \frac{1}{k!} \Gamma \left( \frac{nk-3}{m} \right) (-1)^{(m-n)k/m} T_{*}^{(n-m)k/m},
%\end{equation}
%\noindent
%and 
\begin{equation}
J_{T_{*}}^{(1)} = (-1)^{1+\tfrac{3}{m}} \frac{2 \pi \sigma^{3}}{m T_{*}^{3/m}}
\sum_{k=0}^{\infty} \frac{1}{k!} \Gamma \left( \frac{nk-3}{m} \right) (-1)^{(m-n)k/m} T_{*}^{(n-m)k/m},
\end{equation}
\noindent
and 
\begin{equation}
J_{T_{*}}^{(2)} = - \frac{2 \pi \sigma^{3}}{n T_{*}^{3/n}}\sum_{k=0}^{\infty} \frac{1}{k!} 
\Gamma \left( \frac{km-3}{n} \right) \left( \frac{1}{T_{*}} \right)^{(n-m)k/n}. 
\end{equation}
\noindent
Then 
\begin{equation}
J_{T_{*}} =
J_{T_{*}}^{(1)} 
\quad \textnormal{or}  \quad 
J_{T_{*}}^{(2)},
\end{equation}
%\begin{equation}
%J_{T_{*}} = \begin{cases}
%J_{T_{*}}^{(1)} \\ 
%\textnormal{or} \\
%J_{T_{*}}^{(2)}
%\end{cases}
%\end{equation}
%\noindent
since these two expressions are expansions in the distinct arguments $T_{*}$ or $1/T_{*}$.

The appearance of the term $(-1)^{(m-n)k/m}$ in the expansion of $J_{T_{*}}^{(1)}$ shows that this series is not real, since 
$n>m>3$. Therefore it is 
discarded. It follows that $J_{T_{*}}^{(2)}$ is the only admissible solution.  This is written as 
\begin{equation}
\label{new-series}
J_{T_{*}}= - \frac{2 \pi \sigma^{3}}{n}  T_{*}^{-3/n} \sum_{k=0}^{\infty} \frac{1}{k!} 
\Gamma \left( \frac{km-3}{n} \right) T_{*}^{-(n-m)k/n},
\end{equation}
and observe that all the exponents of $T_{*}$ in \eqref{new-series} are negative. 

The second term in \eqref{B-Tstar} is obtained by replacing $T_{*}$ by $\Lambda$ in \eqref{new-series}.
A  direct calculation shows  that 
% this integral does not contribute so 
% observe that the series 
%$\begin{displaystyle} \sum_{k=0}^{\infty} \frac{1}{k!} 
%\Gamma \left( \frac{km-3}{n} \right) \end{displaystyle}$ converges, by a direct application of Stirling's formula, so that the limit of the 
%second series is zero. Therefore
 $\lim\limits_{\Lambda \rightarrow \infty} J_{\Lambda} = 0$. Therefore this integral does not contribute at $\Lambda = \infty$.  This proves:

\begin{theorem}
\label{virial-thm1}
The second virial coefficient $B(T_{*})$ is given by the series 
\begin{equation}
B(T_{*})= - \frac{2 \pi \sigma^{3}}{n}  T_{*}^{-3/n} \sum_{k=0}^{\infty} \frac{1}{k!} 
\Gamma \left( \frac{km-3}{n} \right) T_{*}^{-(n-m)k/n}, 
\end{equation}
\noindent
where $n > m > 3$.
\end{theorem}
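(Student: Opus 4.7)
The plan is to apply the method of brackets as laid out in Section~\ref{sec-brackets} to the decomposition $B(T_{*}) = J_{T_{*}} - \lim_{\Lambda \to \infty} J_{\Lambda}$ recorded in \eqref{B-Tstar}, and then to extract the unique convergent representation from the list of basis series produced by Rule~3.

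First I would rewrite the integrand of $J_{T_{*}}$ as a product of two exponentials, one for each power $(\sigma/r)^{n}$ and $(\sigma/r)^{m}$, and expand each in a power series using the indicator notation $\phi_{\ell},\phi_{j}$. Multiplying the series and applying Rule~2 turns the radial integral into a two-index bracket series with a single bracket $\langle 3 - \ell n - m j\rangle$, of the shape displayed in Step~1. Because the number of brackets (one) is strictly smaller than the number of summation indices (two), Rule~3 applies and yields exactly two basis series, one obtained by treating $\ell$ as free (call it $J_{T_{*}}^{(1)}$) and one by treating $j$ as free (call it $J_{T_{*}}^{(2)}$). In each case, solving the vanishing-bracket condition determines the starred index, and the Gamma factor and the prefactors $T_{*}^{-3/m}$ or $T_{*}^{-3/n}$ arise from the extension of $C$ to complex arguments.

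Next I would identify the admissible series. The expansion $J_{T_{*}}^{(1)}$ carries a factor $(-1)^{(m-n)k/m}$, which is not real when $n>m>3$, and its powers of $T_{*}$ are positive, so it cannot be the series representation of the original (real) integral and must be discarded. The series $J_{T_{*}}^{(2)}$ displays only strictly negative powers of $T_{*}$ past the overall $T_{*}^{-3/n}$, which is consistent with the high-temperature regime and is the unique admissible basis series.

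Finally, I would dispose of the regulator term. Replacing $T_{*}$ by $\Lambda$ in the closed form of $J_{T_{*}}^{(2)}$ gives an expansion whose every term carries a strictly negative power of $\Lambda$; hence $\lim_{\Lambda \to \infty} J_{\Lambda} = 0$ term by term, and \eqref{B-Tstar} collapses to the formula in the statement. The main obstacle is the selection step in Rule~3: one has to justify, on grounds of reality and convergence, that $J_{T_{*}}^{(2)}$ alone is the correct analytic representation of the integral, which here is handled by noting the spurious complex factor in $J_{T_{*}}^{(1)}$ together with the negativity of all the $T_{*}$-exponents in $J_{T_{*}}^{(2)}$ under the hypothesis $n>m>3$; the remaining bookkeeping (Gamma prefactors, signs, and the $1/n$ coming from $1/|\det A|$) is then routine.
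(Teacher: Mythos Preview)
Your proposal is correct and follows essentially the same route as the paper's own argument in Section~\ref{sec-evaluation}: the same two-index bracket series arising from the product of the two exponential expansions, the same pair of basis series $J_{T_{*}}^{(1)}$ and $J_{T_{*}}^{(2)}$ produced by Rule~3, the same rejection of $J_{T_{*}}^{(1)}$ on account of the non-real factor $(-1)^{(m-n)k/m}$, and the same observation that $J_{\Lambda}\to 0$ because every exponent of $\Lambda$ in \eqref{new-series} is strictly negative. The bookkeeping you flag (the $1/n$ from $|\det A|^{-1}$, the Gamma prefactors, and the sign conventions) is exactly what the paper carries out.
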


\section{Special cases}
\label{sec-specialcases}

The expression for $B(T_{*})$ in Theorem \ref{virial-thm1} is now denoted by $B(n,m,\sigma,T_{*})$ to include the dependence on all 
its  parameters.  It turns out that $B(T_{*})$ simplifies when the ratio $n/m$ is a rational number. For instance, a \texttt{Mathematica} 
evaluation gives 
%\begin{equation}
%B(8,4,\sigma,T_{*}) = -\frac{\pi  \sigma^3 \Gamma \left(-\frac{3}{8}\right)}{4 T_{*}^{3/8}} \pFq11{-\tfrac{3}{8}}{\tfrac{1}{2}}{\frac{1}{4 T_{*}}}
%-\frac{\pi  \sigma^3 \Gamma \left(\frac{1}{8}\right) \,}{4 T_{*}^{11/8}} \pFq11{-\tfrac{1}{8}}{\tfrac{3}{2}}{\frac{1}{4 T_{*}}}
%\end{equation}
%\begin{equation}
%-\frac{\pi  s^3 \Gamma \left(-\frac{3}{8}\right) \, _1F_1\left(-\frac{3}{8};\frac{1}{2};\frac{1}{4 T^2}\right)}{4 T^{3/8}}-\frac{\pi  s^3 \Gamma \left(\frac{1}{8}\right) \, _1F_1\left(\frac{1}{8};\frac{3}{2};\frac{1}{4 T^2}\right)}{4 T^{11/8}}
%\end{equation}
\begin{eqnarray}
B(8,4,\sigma,T_{*}) & = &  -\frac{\pi  \sigma^3 \Gamma \left(-\frac{3}{8}\right) }{4 T_{*}^{3/8}}  \, 
_1F_1\left(-\frac{3}{8};\frac{1}{2};\frac{1}{4 T_{*}} \right)  \\
& & \quad \quad \quad  -
\frac{\pi  \sigma^3 \Gamma \left(\frac{1}{8}\right) }{ 4 T_{*}^{11/8}} \, 
_1F_1\left(\frac{1}{8};\frac{3}{2};\frac{1}{4 T_{*}}\right) \nonumber 
\end{eqnarray}
\noindent
and 
\begin{eqnarray}
B(10,4,\sigma,T_{*}) & = & 
-\frac{\pi ^{3/2} \sigma^3 \, _2F_4\left(\frac{1}{4},\frac{3}{4};\frac{3}{5},\frac{4}{5},\frac{6}{5},\frac{7}{5};\frac{4}{3125 T_{*}^3}\right)}{10 T_{*}^{3/2}}
\\
& & -\frac{\pi  \sigma^3 \Gamma \left(-\frac{3}{10}\right) \, _2F_4\left(-\frac{3}{20},\frac{7}{20};\frac{1}{5},\frac{2}{5},\frac{3}{5},\frac{4}{5};\frac{4}{3125 T_{*}^3}\right)}{5 T_{*}^{3/10}} \nonumber \\
& & -\frac{\pi  \sigma^3 \Gamma \left(\frac{1}{10}\right) \, _2F_4\left(\frac{1}{20},\frac{11}{20};\frac{2}{5},\frac{3}{5},\frac{4}{5},\frac{6}{5};\frac{4}{3125 T_{*}^3}\right)}{5 T_{*}^{9/10}} \nonumber \\
& & -\frac{\pi  \sigma^3 \Gamma \left(\frac{9}{10}\right) \, _2F_4\left(\frac{9}{20},\frac{19}{20};\frac{4}{5},\frac{6}{5},\frac{7}{5},\frac{8}{5};\frac{4}{3125 T_{*}^3}\right)}{30 T_{*}^{21/10}}  \nonumber \\
& & -\frac{\pi  \sigma^3 \Gamma \left(\frac{3}{10}\right) \, _2F_4\left(\frac{13}{20},\frac{23}{20};\frac{6}{5},\frac{7}{5},\frac{8}{5},\frac{9}{5};\frac{4}{3125 T_{*}^3}\right)}{400 T_{*}^{27/10}}. \nonumber 
\end{eqnarray}

It is possible to show that when $m/n$ is a rational number, the expression for $B(n,m,\sigma,T_{*})$ given in 
Theorem \ref{virial-thm1} may be 
reduced to a finite sum of hypergeometric series of the form $_{p}F_{q}$ with $p \leq q$. These have infinite radius of convergence.  This 
is now discussed in detail in the special case $n=2m$.

\subsection{The second virial coefficient for the Mie potential with $n=2m$.} Separating the sum 
according to the parity of the index, the expression for $B(T_{*})$ becomes 
\begin{equation*}
B(T_{*}) = - \frac{\pi \sigma^{3}}{m T_{*}^{3/2m}} 
\left[ \sum_{k=0}^{\infty} \frac{\Gamma \left( k - \tfrac{3}{2m} \right) }{(2k)!} \left( \frac{1}{T_{*}} \right)^{k} + 
\frac{1}{\sqrt{T_{*}}}  \sum_{k=0}^{\infty} \frac{\Gamma \left( k + \tfrac{1}{2} - \tfrac{3}{2m} \right) }{(2k+1)!} \left( \frac{1}{T_{*}} \right)^{k} 
\right].
\end{equation*}

%\begin{Definition}
%The hypergeometric function $_{1}F_{1}$, given by 
%\begin{equation}
%\pFq11{a}{b}{z} = \sum_{k=0}^{\infty} \frac{(a)_{k}}{(b)_{k}} \frac{z^{k}}{k!}
%\end{equation}
%\noindent
%is referred in the literature as the \texttt{Kummer function}.
%\end{Definition}

The next result rewrites the previous expression for $B(T_{*})$ in terms of the Kummer function  $_{1}F_{1}$ defined in \eqref{def-kummer}.

\medskip

\begin{theorem}
\label{thm-kummer1}
The second virial coefficient $B(T_{*})$  can be written in terms of the Kummer function $_{1}F_{1}$ in the form
\begin{equation*}
B(T_{*}) =  - \frac{\pi \sigma^{3}}{m T_{*}^{3/2m}} 
 \left[ 
 \Gamma \left( - \tfrac{3}{2m} \right) 
 \pFq11{-\tfrac{3}{2m}}{\tfrac{1}{2}}{\frac{1}{4T_{*}} }+
\frac{ \Gamma \left( \tfrac{1}{2} - \tfrac{3}{2m} \right) }
 {\sqrt{T_{*}}} 
  \pFq11{\tfrac{1}{2} -\tfrac{3}{2m}}{\tfrac{3}{2}}{\frac{1}{4T_{*}}}
  \right].
  \end{equation*}
  \end{theorem}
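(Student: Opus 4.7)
The plan is to take the displayed split sum for $B(T_*)$ that immediately precedes the theorem and rewrite each of the two inner series as a Kummer $_1F_1$ by applying two standard identities: a Pochhammer shift for the numerator $\Gamma$'s and the Legendre duplication formula for the factorials in the denominators.

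First, I would factor out the gammas. Using $\Gamma(k+a) = \Gamma(a)\,(a)_k$ with $a = -\tfrac{3}{2m}$ in the first sum and $a = \tfrac{1}{2}-\tfrac{3}{2m}$ in the second gives
\begin{equation*}
\Gamma\!\left(k - \tfrac{3}{2m}\right) = \Gamma\!\left(-\tfrac{3}{2m}\right)\left(-\tfrac{3}{2m}\right)_{k}, \quad
\Gamma\!\left(k+\tfrac{1}{2}-\tfrac{3}{2m}\right) = \Gamma\!\left(\tfrac{1}{2}-\tfrac{3}{2m}\right)\left(\tfrac{1}{2}-\tfrac{3}{2m}\right)_{k}.
\end{equation*}
Both constant gammas are finite because $m > 3$, so $-\tfrac{3}{2m} \in (-\tfrac12, 0)$ avoids the poles of $\Gamma$.

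Next, I would convert the denominators $(2k)!$ and $(2k+1)!$ into Pochhammer shape. From the duplication formula $\Gamma(k+\tfrac{1}{2}) = \frac{(2k)!\sqrt{\pi}}{4^k k!}$ and the elementary identity $(3/2)_k = (2k+1)\,(1/2)_k$, one obtains
\begin{equation*}
(2k)! = 4^{k}\,k!\,\left(\tfrac{1}{2}\right)_{k}, \qquad (2k+1)! = 4^{k}\,k!\,\left(\tfrac{3}{2}\right)_{k}.
\end{equation*}
Substituting these into the two inner sums turns $T_*^{-k}/(2k)!$ into $(1/(4T_*))^{k}/(k!\,(1/2)_k)$ and likewise for the other sum with $(3/2)_k$ in place of $(1/2)_k$. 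Comparing with the series definition of $_1F_1$ in Definition \ref{def-kummer} identifies the first sum with $\Gamma(-\tfrac{3}{2m})\, {}_1F_1(-\tfrac{3}{2m};\tfrac{1}{2};\tfrac{1}{4T_*})$ and the second with $\Gamma(\tfrac{1}{2}-\tfrac{3}{2m})\, {}_1F_1(\tfrac{1}{2}-\tfrac{3}{2m};\tfrac{3}{2};\tfrac{1}{4T_*})$.

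Reassembling with the common prefactor $-\pi\sigma^{3}/(mT_*^{3/2m})$, and keeping the extra $1/\sqrt{T_*}$ attached to the second sum exactly as in the starting identity, yields precisely the asserted formula. The calculation is essentially bookkeeping; the only place to be careful is matching the factor $4^{-k}$ that comes out of the duplication identity with the argument $1/(4T_*)$ of the Kummer function, so that no stray powers of $2$ remain. Convergence is automatic since both $_1F_1$'s are entire in their argument.
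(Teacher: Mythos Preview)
Your argument is correct and is exactly the computation the paper intends: the paper does not write out a proof of Theorem~\ref{thm-kummer1} but simply displays the parity-split sum immediately before it and presents the theorem as its hypergeometric rewriting. Your use of $\Gamma(k+a)=\Gamma(a)(a)_k$ together with $(2k)!=4^{k}k!\,(1/2)_k$ and $(2k+1)!=4^{k}k!\,(3/2)_k$ is precisely the bookkeeping needed to pass from that split sum to the two $_1F_1$'s, and your remark that $m>3$ keeps $\Gamma(-3/(2m))$ and $\Gamma(1/2-3/(2m))$ finite is the right consistency check.
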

  
  \medskip
  
  The expression above is now examined in the limiting cases $T_{*} \rightarrow \infty$ and $T_{*} \rightarrow 0$. 
  
  \smallskip 
  
  \noindent
  \texttt{Behavior at $T_{*} \rightarrow \infty$}. This can be read directly: the asymptotic 
  \begin{equation}
  B(T_{*}) \sim -  \frac{1}{m} \pi \sigma^{3} \Gamma \left( - \frac{3}{2m} \right) T_{*}^{-3/2m}>0
  \end{equation}
  \noindent 
  follows  from the expression for $B(T_{*})$.
  
  \medskip
  
  \noindent
  \texttt{Behavior at $T_{*} \rightarrow 0$}. The series for $\pFq11{a}{b}{z}$ 
  converges for all $z \in \mathbb{C}$.  Two transformations for the Kummer function given below play a crucial role.
  Properties of Kummer function  may also be found in \cite{alvarez-2020a}.

  \begin{Lem}
  \label{convert-1}
  The Kummer function satisfies 
  \begin{equation}
  \pFq11{a}{b}{x} = e^{x} \pFq11{b-a}{b}{-x}
  \end{equation}
  \noindent
  and
  \begin{equation}
  \label{form-2}
  \pFq11{a}{b}{x} = e^{x} x^{-b+a} \frac{\Gamma(b)}{\Gamma(a)} \pFq20{1-a \,\,\, 1-b}{-}{\frac{1}{x}}  + o(1)
 % + \frac{\Gamma(b)}{\Gamma(b-a)} ( x^{-1} e^{i \varepsilon \pi})^{a}\pFq20{a \,\,\,\,\,\, 1+a-b}{-}{-\frac{1}{x}}. 
  \end{equation}
  \noindent
  valid as $x \rightarrow \infty$. 
  \end{Lem}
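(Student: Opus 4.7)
The plan is to establish the two transformations by largely independent routes, since they rely on rather different techniques.

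For the first identity, which is Kummer's classical transformation, I would use the Euler-type integral representation
$$M(a,b,x) = \frac{\Gamma(b)}{\Gamma(a)\Gamma(b-a)} \int_{0}^{1} e^{xt}\, t^{a-1}(1-t)^{b-a-1}\,dt,$$
valid for $\mathrm{Re}(b) > \mathrm{Re}(a) > 0$. The substitution $t \mapsto 1-s$ pulls out a factor $e^{x}$ and swaps the roles of the two exponents, giving $e^{x} M(b-a,b,-x)$ directly. The restriction on parameters is removed by analytic continuation, since both sides are entire in $a$ and $x$ and meromorphic in $b$. An alternative route is a purely algebraic one: multiply the power series for $e^{x}$ by that for $M(b-a,b,-x)$ and verify the Chu--Vandermonde-type identity on the coefficient of $x^{n}$.

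For the asymptotic identity, I would again begin from the same Euler integral. The factor $e^{xt}$ is maximized at $t=1$, so the natural step is to localize near that endpoint via $t = 1-u/x$, which produces
$$M(a,b,x) = \frac{\Gamma(b)\,e^{x}}{\Gamma(a)\Gamma(b-a)}\, x^{-(b-a)} \int_{0}^{x} e^{-u}\, u^{b-a-1}\!\left(1-\frac{u}{x}\right)^{a-1} du.$$
Expanding $(1-u/x)^{a-1}$ as a binomial series in $1/x$ and integrating term by term against $e^{-u}u^{b-a-1}$ by Watson's lemma yields the Poincar\'e asymptotic series with coefficients $(1-a)_{n}(b-a)_{n}/n!$, which is the formal $_2F_0$ in the statement. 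Retaining only the leading term and absorbing the tail into $o(1)$ produces the stated relation after clearing the prefactor $e^{x}x^{a-b}\Gamma(b)/\Gamma(a)$.

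The main obstacle is the second identity, on two counts. First, the $_2F_0$ series is divergent for every nonzero $1/x$, so the equation must be interpreted as a Poincar\'e asymptotic expansion: each truncation (times the prefactor) matches $M(a,b,x)$ to within $o(1)$ on the leading asymptotic scale, and one must justify discarding the subdominant contribution from the neighborhood of $t=0$ in the Euler integral, which contributes only an algebraic term of order $x^{-a}$ and is exponentially subdominant to the $e^{x}x^{a-b}$ scale. Second, one must reconcile the numerator parameters: the Euler-integral derivation naturally yields $(1-a,\,b-a)$, whereas the statement lists $(1-a,\,1-b)$. I would treat this discrepancy either as a typographical point or by combining the two formulas in the lemma (applying the first identity inside the asymptotic) to match conventions.
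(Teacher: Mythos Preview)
Your proposal does considerably more than the paper itself: the paper's ``proof'' consists solely of citations---entry $13.1.27$ of Abramowitz--Stegun and Exercise~9 of Chapter~2 in Andrews--Askey--Roy for the Kummer transformation, and Section~4.7 of Luke's \emph{The Special Functions and Their Approximations} for the asymptotic formula. Your Euler-integral argument (substitution $t\mapsto 1-s$ for the first identity, localization at $t=1$ plus Watson's lemma for the second) is the standard textbook route and is correct; it is essentially what those references contain, so you have supplied the proof that the paper merely outsources.

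Your suspicion about the numerator parameters in the second formula is well founded. The standard asymptotic has $(1-a,\, b-a)$, not $(1-a,\, 1-b)$, and the paper's own subsequent computations confirm this: applying the lemma to ${}_{1}F_{1}\!\left(-\tfrac{3}{2m};\tfrac{1}{2};\tfrac{1}{4T_{*}}\right)$ and to ${}_{1}F_{1}\!\left(\tfrac{1}{2}-\tfrac{3}{2m};\tfrac{3}{2};\tfrac{1}{4T_{*}}\right)$, the paper arrives at ${}_{2}F_{0}$ with parameters $\tfrac{2m+3}{2m}$ and $\tfrac{m+3}{2m}$, which are precisely $1-a$ and $b-a$ in each case. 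So the ``$1-b$'' in the lemma is a misprint, not an alternative convention, and there is no need to reconcile anything via the first identity.
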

  \begin{proof}
  The first transformation appears  as entry $13.1.27$ in \cite{abramowitz-1972a}   and also as Exercise $9$ in \cite[Chapter 2]{andrews-1999a}. 
   % The 
 % proof presented here illustrates the \texttt{umbral method}, as described in \cite{roman-1984a}. Multiplying the series on the right, one sees 
 % that the identity is equivalent to the finite sum evaluation
 % \begin{equation}
 % \sum_{j=0}^{n} \binom{n}{j} \frac{(-1)^{j} (b-a)_{j}}{(b)_{j} } = \frac{(a)_{n}}{(b)_{n}}.
 % \end{equation}
 % \noindent
 % The left-hand side can be evaluated directly from the umbra $(u)_{j} \mapsto \mathfrak{u}^{j}$. Indeed
  %\begin{eqnarray}
 % \sum_{j=0}^{n} \binom{n}{j} \frac{(-1)^{j} (b-a)_{j}}{(b)_{j} } & \mapsto & \sum_{j=0}^{n} \binom{n}{j} (-1)^{j} 
  %\frac{ (\mathfrak{b} - \mathfrak{a})^{j}}{\mathfrak{b}^{j}} \\
 % & = &  \sum_{j=0}^{n} \binom{n}{j} (-1)^{j}  \left( 1 - \frac{ \mathfrak{a}}{\mathfrak{b}} \right)^{j} \nonumber \\
%  & = & \left[ 1 - \left( 1 - \frac{\mathfrak{a}}{\mathfrak{b}} \right) \right]^{n} \nonumber \\
%  & = & \left( \frac{\mathfrak{a}}{\mathfrak{b}} \right)^{n} \nonumber \\
 % & \mapsto & \frac{(a)_{n}}{(b)_{n}}. \nonumber 
 % \end{eqnarray}
%  \noindent
  The proof of the second formula appears in Section $4.7$ of \cite{luke-1969a}. 
   \end{proof}

  The asymptotic behavior as $T_{*} \rightarrow 0$ is now obtained from the formula in Theorem \ref{thm-kummer1}.  The 
  identity \eqref{form-2} in the limit $T_{*} \rightarrow 0$,  using the fact that $x$ is proportional to $1/T_{*}$,  is written as 
  \begin{equation}
  \label{form-2a}
  \pFq11{a}{b}{x} \sim  e^{x} x^{-b+a} \frac{\Gamma(b)}{\Gamma(a)} \pFq20{1-a \,\,\, 1-b}{-}{\frac{1}{x}}.
  \end{equation}
  \noindent
  This produces 
  \begin{equation}
  B(T_{*}) = - 2^{(2+3/m)} \frac{\pi^{3/2} \sigma^{3}}{m} \sqrt{T_{*}} \exp \left( \frac{1}{4 T_{*}} \right) 
  \pFq20{ \frac{2m+3}{2m} \,\,\,\, \frac{m+3}{2m}}{-}{4 T_{*}}
  \end{equation}
  \noindent
  and since $T_{*} = kT/4 \varepsilon$, it follows that
  \begin{equation}
  B(T) = - 2^{1+ 3/m} \frac{\pi^{3/2} \sigma^{3}}{m} \sqrt{ \frac{k T}{\varepsilon}} 
  \exp \left( \frac{\varepsilon}{kT} \right) \, \pFq20{ \frac{2m+3}{2m} \,\,\,\, \frac{m+3}{2m}}{-}{\frac{kT}{\varepsilon}}.
  \end{equation}
  in the limit as $T \rightarrow 0$.

  The series above could be truncated to an arbitrary order to obtain the required asymptotic approximation. For instance, to order $2$ in $T$ one 
  obtains 
  \begin{multline}
  B(T) = - 2^{1+m/3} \frac{\pi^{3/2} \sigma^{3}}{m} \sqrt{\frac{kT}{\varepsilon}} \exp \left( \frac{\varepsilon}{kT} \right) \\
  \times \left( 1 + \frac{(m+3)(2m+3)}{4m^{2}} \frac{kT}{\varepsilon}+ 
  \frac{3(m+3)(2m+3)(m+1)(4m+3)}{32m^{4}} \left( \frac{ kT}{\varepsilon} \right)^{2} + \mathcal{O}(T^{3}) \right). 
  \end{multline}

  \begin{Example}
  The result in Theorem \ref{thm-kummer1} in the special case of the Lennard-Jones potential $(n=12, \, m=6)$ yields
  \begin{equation}
  B(T_{*}) = - \frac{\pi \sigma^{3}}{6 T_{*}^{1/4}} 
  \left[ \Gamma \left( - \tfrac{1}{4} \right) \pFq11{-\tfrac{1}{4}}{\tfrac{1}{2}}{ \frac{1}{4 T_{*} }}  + 
 \frac{ \Gamma \left( \tfrac{1}{4} \right) }{\sqrt{T_{*}}} \pFq11{\tfrac{1}{4}}{\tfrac{3}{2}}{ \frac{1}{4 T_{*} }} \right]
  \end{equation}
  \noindent
  and using \eqref{Tstar} this may be written as 
  \begin{equation}
  \label{form-37}
  B(T) = - \frac{\pi \sigma^{3}}{3 \sqrt{2} \left( \frac{kT}{\varepsilon} \right)^{\tfrac{1}{4}}} 
  \Gamma \left( - \tfrac{1}{4} \right) \,
   \pFq11{-\tfrac{1}{4}}{\tfrac{1}{2}}{\frac{\varepsilon}{kT}} -
  \frac{\sqrt{2} \pi \sigma^{3}}{3 \left( \frac{kT}{\varepsilon} \right)^{\tfrac{3}{4}}} \Gamma \left( \tfrac{1}{4} \right) 
   \pFq11{\tfrac{1}{4}}{\tfrac{3}{2}}{\frac{\varepsilon}{kT}}.
   \end{equation}
   \noindent
   This expression appears in \cite{gonzalez-calderon-2015a}.
  \end{Example}
  
  \begin{note}
  The Kummer function may be written as a linear combination of modified Bessel functions of first kind $I_{\nu}(x)$ using the 
  identity \cite[Formula $13.6.11-1$]{olver-2010a}
  \begin{equation}
  \pFq11{\nu+ \tfrac{1}{2}}{2 \nu + 1 + n }{2z} = \Gamma(\nu) e^{z} \left( \frac{z}{2} \right)^{-\nu} 
  \sum_{k=0}^{n} \frac{(-n)_{k} (2 \nu)_{k} (\nu+k)}{(2 \nu + 1 + n)_{k}  \, k!} I_{\nu+k}(z).
  \end{equation}
  \noindent
  From here  the terms appearing in \eqref{form-37} may be  written as 
  \begin{equation}
  \pFq11{- \tfrac{1}{4}}{\tfrac{1}{2}}{\frac{\varepsilon}{kT}} = 
  \frac{\pi}{2 \Gamma \left( \tfrac{3}{4} \right)} \left( \frac{\varepsilon}{kT} \right)^{3/4} 
  \exp \left( \frac{\varepsilon}{2kT} \right)
  \left[ I_{-\tfrac{3}{4}} \left( \frac{\varepsilon}{2 kT} \right) - I_{\tfrac{1}{4}} \left( \frac{\varepsilon}{2 kT} \right) \right],
  \end{equation}
  \noindent
  and 
  \begin{equation}
  \pFq11{ \tfrac{1}{4}}{\tfrac{3}{2}}{\frac{\varepsilon}{kT}} = 
  \frac{ \Gamma \left( \tfrac{3}{4} \right)}{\sqrt{2}}  \left( \frac{\varepsilon}{kT} \right)^{1/4} 
  \exp \left( \frac{\varepsilon}{2kT} \right) 
  \left[ I_{-\tfrac{1}{4}} \left( \frac{\varepsilon}{2 kT} \right) - I_{\tfrac{3}{4}} \left( \frac{\varepsilon}{2 kT} \right) \right].
  \end{equation}
  \noindent
  Replacing in \eqref{form-37} leads to 
  \begin{multline}
  B(T) = \frac{\pi^{2} \sigma^{3} }{3} \, \frac{\varepsilon}{kT} \exp \left( \frac{\varepsilon}{2kT} \right) \times \\
  \left[ 
  I_{-\tfrac{3}{4}} \left( \frac{\varepsilon}{2kT} \right) 
  + I_{\tfrac{3}{4}} \left( \frac{ \varepsilon}{2kT} \right)  
  - I_{\tfrac{1}{4}} \left( \frac{\varepsilon}{2kT} \right)  
  - I_{-\tfrac{1}{4}} \left(\frac{ \varepsilon}{2kT} \right)   
  \right].
  \end{multline}
  \noindent
  This result agrees with the one established in \cite{vargas-2001a}. 
  \end{note}
  
  \begin{note}
The  expression \eqref{form-37} and \eqref{form-2} give the value 
\begin{equation}
B(T) = - \frac{\sqrt{2} \pi^{3/2} \sigma^{3}}{3} \sqrt{ \frac{k T}{\varepsilon}} 
\exp \left( \frac{\varepsilon}{kT} \right) \pFq20{ \tfrac{3}{2} \quad \tfrac{5}{4}}{-}{ \frac{kT}{\varepsilon}}
\end{equation}
\noindent 
or equivalently 
\begin{eqnarray*}
B(T) & = & - \frac{\sqrt{2} \pi^{3/2} \sigma^{3}}{3} \sqrt{ \frac{kT}{\varepsilon}}
\exp \left( \frac{\varepsilon}{kT} \right) 
\sum_{n=0}^{\infty} \frac{\Gamma(n+ \tfrac{3}{4}) \Gamma(n + \tfrac{5}{4})}{\Gamma( \tfrac{3}{4}) \Gamma(\tfrac{5}{4})} 
\frac{ \left( \tfrac{kT}{\varepsilon} \right)^{n}}{n!} \\
& = & - \frac{\sqrt{2} \pi^{3/2} \sigma^{3}}{3} \sqrt{ \frac{kT}{\varepsilon}} \exp \left( \frac{\varepsilon}{kT} \right) 
\left[ 1 + \frac{15}{16} \left( \frac{kT}{\varepsilon} \right) + 
\frac{945}{512} \left( \frac{kT}{\varepsilon} \right)^{2} + \frac{45045}{8192} \left( \frac{kT}{\varepsilon} \right)^{3} + \cdots \right], \nonumber 
\end{eqnarray*}
\noindent
a result appearing  in \cite{gonzalez-calderon-2015a}.
\end{note}

 \subsection{A second example for the  Mie potential: $\mathbf{n=9}$ and $\mathbf{m=6}$.} In this case, the expression for $B(T_{*})$ in Theorem 
 \ref{virial-thm1} gives 
 \begin{equation}
 B(T_{*}) = - \frac{2 \pi \sigma^{3}}{9 T_{*}^{1/3}} \sum_{k=0}^{\infty} \frac{1}{k!} \Gamma \left( \frac{2k-1}{3} \right) 
 \frac{1}{T_{*}^{k/3}}.
 \end{equation}
 \noindent
 The index $k$ is now separated into the three classes modulo $3$. This gives a hypergeometric 
 representation of the second virial coefficient:
 \begin{eqnarray}
 \label{form-46} && \\
 B(T_{*}) & = &- \frac{ 2 \pi \sigma^{3}}{9 T_{*}^{1/3}} \Gamma \left( -\tfrac{1}{3} \right) 
 \pFq11{-\tfrac{1}{6}}{\tfrac{2}{3}}{ \,\, \frac{4}{27 T_{*}}}  \nonumber \\
 & & \quad -  \frac{ 2 \pi \sigma^{3}}{9 T_{*}^{2/3}} \Gamma \left( \tfrac{1}{3} \right) 
 \pFq11{\tfrac{1}{6}}{\tfrac{4}{3}}{ \,\, \frac{4}{27 T_{*}}}
  -  \frac{  \pi \sigma^{3}}{9 T_{*}}
 \pFq22{\tfrac{1}{2} \,\,\,\, 1 }{\tfrac{4}{3} \,\,\,\, \tfrac{5}{3}}{\,\,\frac{4}{27 T_{*}}}.
  \nonumber 
 \end{eqnarray}
 
 The expression above is useful to compute the limiting behavior as $T_{*} \rightarrow \infty$. To determine the behavior as 
 $T_{*} \rightarrow 0$, use the transformation in Lemma \ref{convert-1} to transform the $_{1}F_{1}$ into $_{2}F_{0}$ and the 
 relation 
 \begin{equation}
 \pFq22{ 1 \quad \tfrac{1}{2}}{\tfrac{4}{3} \quad \tfrac{5}{3}}{ \frac{4}{27T_{*}}} = 
 \frac{9 \sqrt{\pi}}{2} T_{*}^{3/2} \exp \left( \frac{4}{27 T_{*}} \right)
 \pFq20{\tfrac{5}{6} \quad \tfrac{7}{6} }{-}{\frac{27 T_{*}}{4}}
 \end{equation}
 \noindent
given in \cite{kim-1972a},  to obtain the hypergeometric representation
\begin{equation}
B(T_{*}) = - \frac{3 \pi^{3/2}}{2} \sigma^{3} \sqrt{T_{*}} \exp \left( \frac{4}{27 T_{*}} \right) 
\pFq20{\tfrac{5}{6} \quad \tfrac{7}{6} }{-}{\frac{27 T_{*} }{4}}
\end{equation}
\noindent
with 
\begin{equation}
\frac{1}{T_{*}} =
%\left( \frac{n}{n-m} \right) \left( \frac{n}{m} \right)^{m/(n-m)} =
 \frac{27 \varepsilon}{4kT}.
\end{equation}
\noindent
Therefore, as $T_{*}  \rightarrow 0$, it follows that 
\begin{equation}
B(T) = - \frac{\pi^{3/2} \sigma^{3}}{\sqrt{3}} \sqrt{ \frac{kT}{\varepsilon}} \exp \left( \frac{ \varepsilon}{kT} \right) 
\pFq20{\tfrac{5}{6}  \quad   \tfrac{7}{6}  } {-}{ \frac{kT}{\varepsilon}}.
\end{equation}
\noindent
This can be used to obtain an asymptotic expansion to any order. For instance, up to order $2$, 
\begin{equation}
B(T) = - \frac{\pi^{3/2} \sigma^{3}}{\sqrt{3}} \sqrt{ \frac{kT}{\varepsilon}} \exp \left( \frac{\varepsilon}{kT} \right) 
\left( 1 + \frac{35}{36} \frac{kT}{\varepsilon} + 
\frac{5005}{2592} \left( \frac{kT}{\varepsilon} \right)^{2} + \mathcal{O}(T^{3}) \right).
\end{equation}

\section{Conclusions}
\label{sec-conclusions}

The virial coefficients appear in the expansion of pressure of a many-particle system as a power series in the density. The second  
virial coefficient $B(T)$, has a definite integral expression in terms of the intermolecular potential.  In the case of the Mie potential, a 
generalization of the classical Lennard-Jones potential, we have evaluated this integral by the method of brackets and obtained
an analytic expression as a series in the temperature parameter $T$. 

The Mie potential contains two parameters $n, \, m$, restricted to $n>m>3$. If the ratio $m/n$ is a rational number, then $B(T)$ is a finite 
sum of hypergeometric functions. The case $n=2m$ is discussed in detail, providing asymptotic behaviors as $T \rightarrow \infty$ and 
$T \rightarrow 0$, this includes as special case the Lennard-Jones potential $(n=12, \, m=6$). The second special case $n=9$ and $m=6$ is also discussed and new results are obtained in the case $T \rightarrow 0$.

 In comparison with the evaluation by classical analytic procedures, the method of brackets produces a direct and 
simpler evaluation of the second virial coefficient. 

\section{Acknowledgments}

The first author wishes to thank the hospitality of the Mathematics Department where part of this work was 
developed.  The second author was supported in part by Fondecyt (Chile) Grants Nos. 1040368, 1050512 and 1121030, by DIUBB (Chile) Grant Nos. 102609,  
GI 153209/C  and GI 152606/VC.

\end{document}